%% file: main.tex
\documentclass[journal]{IEEEtran}
\usepackage[cmex10]{amsmath}
\usepackage{amssymb}
\usepackage{graphicx}
\usepackage{booktabs}
\usepackage{cite}
\usepackage{url}
\usepackage{xcolor}
\usepackage{tikz}
\usetikzlibrary{arrows.meta,positioning,calc}

\newcommand{\krtcommit}{2e319092}
\newtheorem{lemma}{Lemma}
\newtheorem{proposition}{Proposition}
\definecolor{Fcu}{RGB}{200,60,60}
\definecolor{Bcu}{RGB}{50,100,200}
\begin{document}

\title{Whole-Route Planning of a Memory Bus Between Two BGAs: An Exact Model of Crossings and Layer Changes, Negotiated to DRC-Clean Copper}

\author{Andrew~Haas%
\thanks{A. Haas is a Professor in the Department of Physics, New York University, New York, NY 10003 USA (e-mail: andy.haas@nyu.edu).}%
\thanks{The software, the benches and the recipes that reproduce every number in this paper are open source~\cite{krt}; the results are those of its commit \texttt{\krtcommit}.}}

\markboth{IEEE Transactions on Computer-Aided Design of Integrated Circuits and Systems}%
{Haas: Whole-Route Planning of a Memory Bus Between Two BGAs}

\maketitle

\begin{abstract}
Routing the bus between a processor and its DRAM remains largely manual in printed circuit board (PCB) design, and published algorithms optimize proxies such as crossing or layer counts. We present a router that plans every lane of a two-layer BGA-to-BGA bus before any copper is laid, and carries the plan to design-rule-clean copper. Its core is an exact constraint-programming model that chooses, for all lanes at once, where each necessary crossing occurs, in what order---by a rule over every triple of lanes---and where each lane changes layer, with via room; for the ends it is given, it proves its plan optimal within the model. The model reads naturally as physics: worldlines, frustrated Ising spins and contact forces. A linear program places the lanes in millimeters; the constraints it must violate return to the model as cuts, and audit findings as negotiated congestion prices, until the plan passes an audit against the production router's rules or lays clean. The escapes are chosen by an estimate of the same objective, and differential pairs carry their polarity, with an automatic via crossover where the ends disagree. On an Allwinner H3 and a Zynq-7020, each to a DDR3 device, the router completes every rung of two ladders, of up to 51 nets on the H3 (its whole bus) and 43 on the Zynq, connected and DRC-clean, never with more vias than the human designer (80 against 88 and 62 against 100 at the top rungs) and always with less copper, where the toolkit's production router leaves nets open on four of the ten rungs and Freerouting completes none. On the Zynq board itself it completes 45 of the bus's nets with 70 vias to the designer's 105. On generated buses of three, four and six routing layers every run completes, and every one with a known optimum attains it.
\end{abstract}

\begin{IEEEkeywords}
Printed circuit board, bus routing, escape routing, topological via minimization, layer assignment, constraint programming, linear programming, differential pairs, DDR.
\end{IEEEkeywords}

\section{Introduction}
\IEEEPARstart{A}{} memory bus is the densest wiring on most processor boards: tens of nets leave a ball grid array (BGA) in a fixed order, arrive at another in a different order, and must cross one another on the way. On a two-layer region every crossing forces one of the two nets onto the other layer there, and every layer change is a via---a drill, an annular ring that blocks both layers, and a discontinuity in the signal path. Commercial tools approach it interactively: the designer sequences the bus, plans its breakouts and resolves its crossings, and the tool fills in the route~\cite{cadence2015,pfeil2014}.

The literature treats the problem in stages---escape routing, simultaneous escape routing, bus planning and layer assignment (Section~\ref{sec:related})---each optimizing a proxy such as crossings, layers or a planned via count, evaluated on private industrial cases. Whether a plan can be laid in copper at the board's clearance, and how its via count compares with a designer's, are left open.

This paper starts from an observation that designers, commercial tools and simultaneous escape routing share~\cite{cadence2015,ozdal2006ser}: \emph{the ends decide the braid}. Once each net's escape at the source and destination is fixed, the orders of those escapes around the two arrays fix which pairs of nets must cross. What remains free is where each crossing happens, in what order the crossings occur, and where each lane changes layer. These choices interact, so we decide all of them at once, for every lane, before anything is routed. The contributions are:
\begin{enumerate}
\item An exact constraint-programming (CP) model of the whole route (Section~\ref{sec:solve}) that chooses crossing positions, their order, and every lane's layer changes jointly. Its consistency rule over triples of lanes admits exactly the orders that a wiring diagram of the lanes' permutation can realize. For the ends it is given, it proves its plan optimal within the model.
\item A realization of the plan in board geometry (Section~\ref{sec:copper}): an elastic linear program whose violated constraints return to the CP model as cuts, a snap onto the production router's grid, and an audit against the router's own rules, closed in a negotiation loop that also feeds findings at the ends back to the choice of escapes (Section~\ref{sec:ends}).
\item Differential pairs planned as single lanes with a polarity (handedness) at each end, and an automatic via crossover that swaps P and N where the two ends disagree.
\item An evaluation on two public boards, on benches cut from them and on one board itself, against the human layouts, the toolkit's production router and Freerouting, all graded alike on finished copper, and on generated buses of two to six layers against their exact optimum (Section~\ref{sec:results}).
\end{enumerate}

The models read naturally in physics. Along the route each lane is the worldline of a particle in one space and one time dimension, each crossing a scattering event, and the triple rule the Yang--Baxter relation of three strands. Each worldline carries an Ising spin, its layer, opposite its partner's at every crossing: a via is a domain wall, and frustration forces vias. Placing the lanes is then a problem in statics, whose contact forces are the linear program's dual variables.

\section{Related Work}
\label{sec:related}
\emph{Topological via minimization.} Given the nets' topology, choosing layers to minimize vias is classical. Hsu~\cite{hsu1983} decoupled topology from geometry to minimize vias; Marek-Sadowska~\cite{marek1984} showed the unconstrained problem NP-complete, and Sarrafzadeh and Lee~\cite{sarrafzadeh1989} gave a polynomial algorithm for nets between two shores, where the problem reduces to covering a permutation graph. Constrained via minimization assigns layers to an already routed layout~\cite{pinter1984}: it is NP-complete in general~\cite{naclerio1989}, and where the layout's junctions have degree at most three it is a maximum cut on a planar graph, solvable in polynomial time~\cite{hadlock1975,pinter1984}, which Barahona \emph{et al.}\ solved with the same code as Ising spin-glass ground states~\cite{barahona1988}. Exact algorithms followed for $k$ layers in channels~\cite{rim1989,cong1991} and for a package's via-free redistribution layers~\cite{lin2017}. None of these places the crossings, orders them, or gives vias their room, jointly with the layers.

\emph{PCB escape and bus routing.} Ozdal and Wong chose escape patterns at both ends of a channel jointly, with a longest-path search for each layer's planar subset and negotiation for the rest~\cite{ozdal2006ser}, and Ozdal, Wong and Honsinger minimized the vias the choice causes directly~\cite{ozdal2008via}. Bus planning and layer assignment for dense boards followed~\cite{kong2009,yan2019}; where a bus's pin orders at its two ends disagree, Yan and Wong untangle the twisted nets with single detours near the ends, on one layer and with no vias~\cite{yan2007untangle,yan2009untangle}. Exact models route the escapes at one array~\cite{luo2008,bayless2016} or a two-layer BGA package over a restricted route structure~\cite{tomioka2009}, and differential pairs have escape routers of their own~\cite{yan2010diff}. Wu \emph{et al.}\ unified escape and area routing in one model, searching escape orders by simulated annealing~\cite{wu2026}. Lin \emph{et al.}\ carried simultaneous escape routing through gridless area routing to finish commercial boards that a commercial router could not~\cite{lin2021}. Apart from that flow and the escape routers, this work stops at a plan, and all of it is evaluated on private cases.

\emph{Topology, then geometry.} Rubber-band routers~\cite{dai1991} and the theory of homotopic routing~\cite{leiserson1985} separate a topological route from its embedding. Wen \emph{et al.}\ assign a package's layers by a maximum planar subset and finish its layout by linear programming~\cite{wen2022}. Negotiated congestion~\cite{mcmurchie1995} prices places that were overused before.

\emph{Wiring diagrams.} The orders in which the crossings of a permutation's reduced words can occur are the convex orders of its inversion set~\cite{papi1994}; for the longest permutation their commutation classes form the higher Bruhat order $B(n,2)$~\cite{manin1989}. We are not aware of a model that uses this structure to minimize vias jointly with crossing positions and via room.

\section{The Problem and the Frame}
\label{sec:problem}
\subsection{Setting}
A \emph{bench} is a board region holding a source BGA $S$ (the processor) and a destination BGA $D$ (the DRAM), a set of bus nets between them, and everything else---passives, other nets' stubs, the board edge---as fixed obstacles. The region has two signal layers, F and B (Section~\ref{sec:layers} extends the model to more). Each net leaves $S$ at a \emph{tooth}, the end of a fanout stub from its ball, and arrives at $D$ at a \emph{berth}; each end has a layer. The fanout engine offers every ball a menu of escape moves (a stub on one layer, or a dog-bone via to the other), from which the ends are chosen (Section~\ref{sec:ends}). A differential pair is one \emph{lane} with two legs. A lane's layer change is a \emph{dive}: one via for a single lane, two barrels side by side for a pair. The benches' clearance is 0.1\,mm; we plan and route 5\,$\mu$m above it, at 0.105\,mm, with 0.127\,mm lanes, 0.1\,mm fanout stubs and 0.25\,mm vias (0.15\,mm drill) on the router's grid $g=0.025$\,mm. Two neighboring lanes on one layer must lie at least a \emph{bar} of 0.232\,mm apart, center to center (a lane and the clearance); the least pitch at which lanes are planned is $p=0.252$\,mm, a bar and 20\,$\mu$m.

The objective counts planned vias: one for every layer change of a lane, and five more for every via a net carries beyond two, its stub vias included (a pair counted by its worse leg); the stub vias themselves come with the ends and cost nothing more. A net whose vias are all layer changes thus costs 0, 1, 2, 8 or 14 for 0 to 4 vias. Below a via comes a congestion tie-break. These counts are of the plan, not the board: the stub vias come from the choice of ends (Section~\ref{sec:ends}), and a pair's layer change counts once although it lays two barrels. The result must be routed by the production router, connected, and clean under a design-rule check (DRC) at the board's clearance.

\subsection{The frame and the route coordinate}
\label{sec:frame}
The frame is read off the board alone (Fig.~\ref{fig:frame}). The board is first turned by an exact quarter turn, and mirrored for pairs of the other handedness (both undone on writing), so that $S$ lies west of $D$. Each array's pad box is grown by the median distance its stubs end outside it. $S$'s \emph{far} face is the one away from $D$; $D$'s \emph{facing} face is the one toward $S$, and its far face the one away. A straight \emph{trunk} runs from the center of $S$'s box to that of $D$'s, extended past each by half its diagonal and $L=0.35$\,mm; a point near it is $(s,o)$, arclength along it and offset across it.

The orders are unrolled from \emph{cuts} on the far faces. On $D$'s far face the cut stands in the middle of the widest gap between its berths, the face's ends counted as gap ends, so that berths together near one end go round one side. On $S$'s far face it stands in the gap that makes the fewest crossings. Unrolling the berths round $D$ from its cut, north first, gives the \emph{berth order} $\varphi$; unrolling the teeth round $S$ gives the \emph{launch order} $\lambda$. The ends model hands its cuts to the frame with the plan, so that model and solve share one order.

A lane berthing on $D$'s facing face ends in the trunk; the least $s$ of these berths is the \emph{arrival line} $H_0$. The others ride a \emph{ring}, north or south of $D$'s cut. They leave the trunk at the ring's start $H_k$, stepped back from $H_0$, stacked in berth order a lane's width apart outside the facing face's lanes and the octilinear hull of $D$'s pads, stubs and the copper hugging it. The ring's spine runs taut from there to the hull and then round it, $L$ out, the way its lanes' taut paths sweep. A tooth on $S$'s far face reaches its side face on a \emph{lead-in}, straight out of the face and round $S$'s corner on a $45^\circ$ chamfer, nested with the others of its side; from there the trunk carries it. Each lane's \emph{reference} is its taut path, moved out of both boxes on the lane's own side.

Every lane's route is then one scalar coordinate $u$: on the trunk $u=s$, from the lane's \emph{entry}, the $s$ of its start; on ring $k$, $u=H_k+s'-r_k$, with $s'$ arclength along the ring's spine and $r_k$ the point on it past all of its handoffs. Lanes at the same $u$ in one frame stand across the bus from one another, so a crossing's position is one number for both lanes. We discretize $u$ on a grid $G$ of a fifth of $p$, 0.0504\,mm.

\begin{figure}[t]
\centering
\begin{tikzpicture}[x=0.93cm,y=0.72cm,font=\scriptsize,
  gbox/.style={draw=black,line width=0.5pt},
  spine/.style={dashed,line width=0.55pt},
  lane/.style={line width=1.1pt,line cap=round,line join=round},
  cutm/.style={line width=1.3pt,green!45!black}]
\foreach \x in {0.2,0.5,...,1.45} \foreach \y in {-0.75,-0.45,...,0.8} \fill[gray!55] (\x,\y) circle (1.1pt);
\foreach \x in {5.3,5.6,...,6.35} \foreach \y in {-0.6,-0.3,...,0.65} \fill[gray!55] (\x,\y) circle (1.1pt);
\draw[gbox] (0,-1) rectangle (1.6,1);
\draw[gbox] (5.0,-0.9) rectangle (6.6,0.9);
\node[fill=white,inner sep=1pt] at (1.2,0.45) {$S$};
\node[fill=white,inner sep=1pt] at (6.1,0.45) {$D$};
\draw[spine,-{Stealth[length=4pt]}] (-0.55,0)--(7.7,0) node[below left=1pt]{$s$};
\node[above] at (2.4,0.0) {trunk};
\draw[spine] (3.95,1.0)--(4.7,1.15)--(6.75,1.15)--(6.85,1.05)--(6.85,0.35);
\draw[spine] (3.8,-1.0)--(4.7,-1.15)--(6.75,-1.15)--(6.85,-1.05)--(6.85,-0.25);
\node[above] at (5.3,1.62) {N ring};
\node[below] at (5.3,-1.62) {S ring};
\draw[line width=0.4pt] (3.95,1.0)--(3.95,1.75) node[above]{$H_N$};
\draw[line width=0.4pt] (3.8,-1.0)--(3.8,-1.75) node[below]{$H_S$};
\draw[line width=0.4pt] (5.0,-0.14)--(5.0,0.14);
\node[above left=-1pt] at (5.0,0.0) {$H_0$};
\draw[lane,Fcu] (1.6,0.55)--(3.2,0.55)--(3.4,0.35)--(5.0,0.35);
\draw[lane,Bcu] (1.6,-0.35)--(5.0,-0.35);
\draw[lane,Bcu] (0.9,1.0)--(0.9,1.05)--(1.05,1.2)--(3.95,1.2)--(4.7,1.35)--(5.8,1.35)--(5.8,0.9);
\draw[lane,Fcu] (0.9,-1.0)--(0.9,-1.05)--(1.05,-1.2)--(3.8,-1.2)--(4.7,-1.35)--(5.8,-1.35)--(5.8,-0.9);
\draw[lane,Bcu] (0.45,-1.0)--(0.45,-1.25)--(0.6,-1.4)--(3.8,-1.4)--(4.7,-1.55)--(6.85,-1.55)--(7.25,-1.15)--(7.25,-0.4)--(6.6,-0.4);
\draw[lane,Fcu] (0,0.55)--(-0.3,0.55)--(-0.3,1.1)--(0.0,1.4)--(3.95,1.4)--(4.7,1.55)--(6.85,1.55)--(7.25,1.15)--(7.25,0.5)--(6.6,0.5);
\fill[white,draw=black,line width=0.5pt] (0.0,1.4) circle (1.6pt);
\node[left] at (-0.3,0.95) {lead-in};
\draw[cutm] (-0.18,-0.22)--(0.18,-0.22);
\draw[cutm] (6.42,0.05)--(6.78,0.05);
\node[left,green!35!black] at (-0.18,-0.22) {cut};
\node[right,green!35!black] at (6.8,0.2) {cut};
\node[rotate=90,font=\tiny] at (5.13,-0.62) {facing};
\node[rotate=90,font=\tiny] at (6.47,-0.42) {far};
\end{tikzpicture}
\caption{The frame, schematically. The trunk (dashed, arclength $s$) runs through the grown boxes of $S$ and $D$. Lanes berthing on $D$'s facing face end in the trunk, the first at $H_0$; the others leave it at their ring's start ($H_N$, $H_S$), stacked in berth order, and ride a ring to $D$'s side faces and to its far face, north or south of its cut (green). A tooth north of $S$'s cut reaches the north face on a lead-in, to its start (circle). These six lanes keep one order and cross nowhere. Red is F, blue is B.}
\label{fig:frame}
\end{figure}

A pair of lanes $(a,b)$ with $\lambda(a)<\lambda(b)$ is \emph{inverted} if $\varphi(a)>\varphi(b)$. Every inverted pair must cross; in our model it crosses exactly once, and no other pair crosses. The set $I$ of inverted pairs is the inversion set of the permutation $\pi$ that takes launch order to berth order (Fig.~\ref{fig:braid}). We call the crossings, with the order in which they occur, the \emph{braid}; unlike a braid in topology it records no over and under, which is what the layers decide.

\begin{figure}[t]
\centering
\begin{tikzpicture}[x=0.93cm,y=0.6cm,line width=1.4pt,line cap=round]
\draw[Fcu] (0,0)--(4,0)--(5,1)--(8,1);
\draw[Fcu] (0,1)--(2,1)--(3,2)--(5.5,2)--(6.5,3)--(8,3);
\draw[Fcu] (0,2)--(1.4,2);
\draw[Bcu] (1.4,2)--(2,2)--(3,1)--(4,1)--(5,0)--(5.6,0);
\draw[Fcu] (5.6,0)--(8,0);
\draw[Fcu] (0,3)--(5.0,3);
\draw[Bcu] (5.0,3)--(5.5,3)--(6.5,2)--(7.0,2);
\draw[Fcu] (7.0,2)--(8,2);
\foreach \p in {(1.4,2),(5.6,0),(5.0,3),(7.0,2)} \draw[line width=0.7pt,fill=white] \p circle (3pt);
\foreach \y/\l in {0/1,1/2,2/3,3/4} \node[font=\scriptsize,left] at (-0.05,\y) {$\l$};
\foreach \y/\l in {0/3,1/1,2/4,3/2} \node[font=\scriptsize,right] at (8.05,\y) {$\l$};
\node[font=\scriptsize,above] at (0,3.3) {launch $\lambda$};
\node[font=\scriptsize,above] at (8,3.3) {berth $\varphi$};
\draw[-{Stealth[length=4pt]},line width=0.5pt] (0,-0.8)--(8.3,-0.8) node[right,font=\scriptsize]{$u$};
\foreach \x/\l in {2.5/{t_{23}},4.5/{t_{13}},6/{t_{24}}} {
  \draw[line width=0.5pt] (\x,-0.9)--(\x,-0.7);
  \node[below,font=\scriptsize] at (\x,-0.85) {$\l$};}
\end{tikzpicture}
\caption{Four lanes as a wiring diagram along the route coordinate $u$. The berth order is $\pi=(3,1,4,2)$, so the inverted pairs $\{2,3\},\{1,3\},\{2,4\}$ must each cross once. On the triple $(1,2,3)$ only $\{1,3\}$ and $\{2,3\}$ are inverted, so the triple rule forces $t_{23}<t_{13}$. Red is F, blue is B; circles are vias. Crossing lanes are on different layers. Lane 3 serves its two F partners in one excursion to B, and lane 4 serves one: four vias, equal to the floor $2(K-\mathrm{LIS}(\pi))=2(4-2)$ of Lemma~\ref{lem:floor}.}
\label{fig:braid}
\end{figure}

\subsection{Two bounds}
\begin{lemma}[Floor]
\label{lem:floor}
If both ends of every lane are on F, any routing within the frame---lanes that pass round neither array, so that they leave the source in the order $\lambda$ and reach the destination in the order $\varphi$---needs at least $2(K-\mathrm{LIS}(\pi))$ vias, where $\mathrm{LIS}$ is the length of the longest increasing subsequence.
\end{lemma}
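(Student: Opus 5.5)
The plan is to show that the lanes that never leave F form a set no two of whose members cross, and then to count the vias of the rest. Call a lane \emph{flat} if it carries no via, so that it lies on F from end to end. Two lanes that cross in the frame must be on different layers at the crossing, because copper of two nets cannot overlap on one layer. Two flat lanes are both on F everywhere, so no two flat lanes cross. We must also rule out a non-inverted pair of flat lanes being forced apart. This is where the frame hypothesis enters: the lanes pass round neither array, so each lane is a curve from the source side to the destination side of a strip. Two disjoint curves in the strip keep their relative order, by a Jordan-curve argument on the strip with its two ends as boundary arcs. So a pair $(a,b)$ of flat lanes that are disjoint on F must leave in the same order as they arrive, i.e.\ $\lambda(a)<\lambda(b)$ implies $\varphi(a)<\varphi(b)$. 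Hence the flat lanes, read in launch order, form an increasing subsequence of $\pi$, and there are at most $\mathrm{LIS}(\pi)$ of them.

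Every other lane carries at least one via. Its ends are both on F by hypothesis, so the layer along the lane, read as a function of the route coordinate, starts and ends at F. If it ever takes the value B, it must change layer an even number of times, and so at least twice. Because a via is counted once per layer change and the vias of different nets are distinct, summing over the at least $K-\mathrm{LIS}(\pi)$ non-flat lanes gives at least $2(K-\mathrm{LIS}(\pi))$ vias. A differential pair is one lane, and by the counting convention its dive counts once, so it does not break the count.

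The main obstacle is the topological step: making precise that ``passing round neither array'' confines the lanes to a strip. In such a strip, disjoint lanes cannot exchange order and an inverted pair must meet, even though a real route is not monotone in $u$. I would handle it by closing the strip into a disc. The teeth lie in order $\lambda$ on one boundary arc and the berths in order $\varphi$ on the other. For an inverted pair the endpoints then interleave on the disc's boundary, so the two curves must intersect. Intersecting copper on one layer is forbidden, so at such a point the two lanes are on different layers, and both cannot be flat. The rest is the parity count above.
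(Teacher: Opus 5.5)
Your proof is correct and is the paper's argument: lanes that never change layer stay on F, cannot cross one another, and so form an increasing subsequence of $\pi$, while every other lane leaves F and must return, costing at least two vias. The only difference is that you spell out the topological step the paper's two-line proof takes for granted: an inverted pair's ends interleave on the boundary of the region opened at the two far-face cuts, so the pair must meet.
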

\begin{IEEEproof}
Lanes that never change layer stay on F and so cannot cross one another; they form an increasing subsequence of $\pi$. Every other lane leaves F and returns, at least two vias.
\end{IEEEproof}
The floor is the classical maximum planar subset of the two-shore case~\cite{sarrafzadeh1989}, and it is weak: lanes that leave F also cross one another, and an excursion to B cannot serve a crossing with a partner that is itself on B. For a uniformly random permutation the LIS grows as $2\sqrt{K}$, with Tracy--Widom fluctuations of order $K^{1/6}$~\cite{baik1999}, the statistics of the KPZ growth class~\cite{prahofer2000}; real buses are structured, but this is a yardstick for how the floor grows. The next lemma counts a lane's vias exactly, given its partners.

\begin{lemma}[Blocks]
\label{lem:blocks}
Fix the layers of lane $n$'s crossing partners at their crossings, in the order the crossings occur along $n$, and let both ends of $n$ be on F. The fewest layer changes of $n$ is twice the number of maximal runs of consecutive partners that are on F.
\end{lemma}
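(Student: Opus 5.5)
The statement reduces to a one-dimensional problem along lane $n$'s route. I would first write the sequence of $n$'s partners, in the order their crossings occur along $n$, as a word $w_1w_2\cdots w_m$ over $\{\mathrm{F},\mathrm{B}\}$, where $w_i$ is the layer of the $i$-th partner at its crossing. At each crossing the two lanes must lie on different layers. So $n$ must be on B at crossing $i$ exactly when $w_i=\mathrm{F}$, and on F when $w_i=\mathrm{B}$. Call $\bar w$ the word with every letter flipped. The layer of $n$ along its route is then a piecewise-constant function $\ell(u)$ that starts on F, ends on F, and takes the value $\bar w_i$ at each crossing. Its number of layer changes is the number of switches of $\ell$. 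Between crossings $\ell$ is free; I take the frame's room for dives as given, since the lemma concerns only the count.

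For the lower bound, I would collapse $\ell$ to the word $\mathrm{F}\,\bar w_1\cdots\bar w_m\,\mathrm{F}$, its values at the start, at each crossing in order, and at the end. Every switch between two consecutive letters of this word forces at least one switch of $\ell$ in the corresponding open interval. Hence the number of layer changes is at least the number of adjacent unequal letters in that padded word. This word begins and ends with F. Its B-letters occur exactly at the positions where a partner is on F, and its maximal blocks of B are exactly the maximal runs of consecutive partners on F. Each such block is entered once from F and left once back to F, so it contributes exactly two adjacent unequal pairs, and no other pairs are unequal. The bound is therefore twice the number of runs.

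For achievability, I would define $\ell$ to equal the padded word's value on each crossing and to switch only once in each gap where consecutive letters differ, for example at the midpoint. This $\ell$ meets every crossing constraint and both end conditions. Its switch count equals the number of unequal adjacent letters, $2r$, where $r$ is the number of maximal runs of consecutive partners on F. It also never switches within a run, so one excursion to B serves a whole run, as lane 3 does in Fig.~\ref{fig:braid}.

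The only delicate step is making sure that ``consecutive'' is measured in the order of crossings along $n$ alone, and that runs of B-partners correctly cost nothing. A B-partner between two F-runs genuinely separates them, because $n$ must be back on F at that crossing. Simultaneous crossings at one $u$ cannot occur for a single lane in the model, since each of its crossing positions is distinct. So the word is well defined, and the argument is the standard count of boundary changes of a binary string with fixed ends.
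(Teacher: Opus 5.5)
Your proof is correct and follows essentially the same argument as the paper. The paper also notes that $n$ must be on B across each maximal run of F partners, and on F across runs of B partners and at both ends, so each F run costs exactly one change in and one out; your padded-word count states the same lower bound and the matching construction more explicitly.
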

\begin{IEEEproof}
At each crossing $n$ must be on the partner's opposite layer: on B across a maximal run of F partners, on F across a run of B partners and at both ends. Each F run therefore needs one change into B and one out, and no change is needed elsewhere.
\end{IEEEproof}
The difficulty is therefore not placing a lane's vias given its partners, but choosing every lane's layers and crossing order together---which is what the whole-route model does.

\section{The Whole-Route Solve}
\label{sec:solve}
One CP-SAT~\cite{cpsat} model decides every crossing and every layer change for all lanes at once (Fig.~\ref{fig:beads}).

\begin{figure}[t]
\centering
\begin{tikzpicture}[x=0.9cm,y=0.74cm,font=\scriptsize,
  bead/.style={draw=black,fill=gray!12,rounded corners=1.5pt,line width=0.5pt},
  tie/.style={gray,dashed,line width=0.5pt}]
\fill[green!55!black,opacity=0.22,rounded corners=1pt] (3.2,-0.19) rectangle (6.8,0.19);
\node[green!35!black,font=\tiny] at (6.15,-0.42) {window of $t_{13}$};
\draw[Fcu,line width=1.6pt] (0,0)--(8,0);
\draw[Fcu,line width=1.6pt] (0,1)--(8,1);
\draw[Fcu,line width=1.6pt] (0,2)--(1.4,2);
\draw[Bcu,line width=1.6pt] (1.4,2)--(5.6,2);
\draw[Fcu,line width=1.6pt] (5.6,2)--(8,2);
\draw[Fcu,line width=1.6pt] (0,3)--(5.0,3);
\draw[Bcu,line width=1.6pt] (5.0,3)--(7.0,3);
\draw[Fcu,line width=1.6pt] (7.0,3)--(8,3);
\draw[tie] (2.5,1)--(2.5,2);
\draw[tie] (4.5,0)--(4.5,2);
\draw[tie] (6.0,1)--(6.0,3);
\draw[bead] (2.2,0.86) rectangle (2.8,1.14);
\draw[bead] (2.35,1.86) rectangle (2.65,2.14);
\draw[bead] (4.2,-0.14) rectangle (4.8,0.14);
\draw[bead] (4.35,1.86) rectangle (4.65,2.14);
\draw[bead] (5.7,0.86) rectangle (6.3,1.14);
\draw[bead] (5.85,2.86) rectangle (6.15,3.14);
\node[font=\tiny,fill=white,inner sep=0.6pt] at (2.5,1.5) {$t_{23}$};
\node[font=\tiny,fill=white,inner sep=0.6pt] at (4.5,0.5) {$t_{13}$};
\node[font=\tiny,fill=white,inner sep=0.6pt] at (6.0,2.5) {$t_{24}$};
\foreach \p/\l/\w in {(1.4,2)/{c_{3,1}}/above, (5.6,2)/{c_{3,2}}/below, (5.0,3)/{c_{4,1}}/above, (7.0,3)/{c_{4,2}}/above} {
  \draw[line width=0.6pt,fill=white] \p circle (3.2pt);
  \node[font=\tiny,\w=2.5pt] at \p {$\l$};}
\foreach \y/\l in {0/1,1/2,2/3,3/4} {
  \node[left] at (-0.1,\y) {$\l$};
  \node[right,font=\tiny,Fcu] at (8.05,\y) {F};}
\node[font=\tiny,align=left,anchor=south west] at (-0.3,3.3) {bead: a crossing (full pitch on the lane that stays, shorter on the mover)\\circle: a layer change};
\draw[-{Stealth[length=4pt]},line width=0.5pt] (0,-0.75)--(8.3,-0.75) node[right]{$u$};
\end{tikzpicture}
\caption{The whole-route model as beads on tracks, for the four lanes of Fig.~\ref{fig:braid} (all ends on F). Each lane's route coordinate $u$ is a track, drawn in launch order. A crossing variable $t_{ab}$ is one bead that sits on both lanes' tracks at once (dashed tie), and intervals on one track may not overlap; each crossing slides only within its window (green, shown for $t_{13}$). Layer changes $c_{n,k}$ are beads too (circles), and between them a track's color is its layer. The model requires tied tracks to differ in color at their bead and each lane's number of changes to have the parity of its end layers, and the triple rule orders $t_{23}$ before $t_{13}$. The solver slides every bead at once and proves the plan optimal.}
\label{fig:beads}
\end{figure}

\emph{Crossings.} For each $(a,b)\in I$ an integer $t_{ab}\in[\ell_{ab},h_{ab}]$ on the grid $G$ is the crossing's position along the route. Its window opens past both lanes' entries (a pair's end connector, inside which nothing may cross it) and, on two layers, past a band of 0.65\,mm in front of $S$ where the teeth stand; it closes before the earlier lane leaves the trunk, at its landing or its ring's start, or, where both ride one ring, before the earlier landing on it. The \emph{triple rule}: for every $i<j<k$ in launch order,
\begin{itemize}
\item if $ij,ik,jk\in I$: $t_{ij}<t_{ik}<t_{jk}$ or $t_{jk}<t_{ik}<t_{ij}$ (one Boolean);
\item if $ij,ik\in I$ and $jk\notin I$: $t_{ij}<t_{ik}$;
\item if $ik,jk\in I$ and $ij\notin I$: $t_{jk}<t_{ik}$.
\end{itemize}
(The case $ij,jk\in I$, $ik\notin I$ cannot occur.) The rule is a known characterization:
\begin{proposition}
\label{prop:triple}
A linear order of $I$ is the order in which the crossings of some wiring diagram (reduced word) of $\pi$ occur if and only if it satisfies the triple rule.
\end{proposition}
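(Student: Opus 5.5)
We prove the two directions separately. Throughout, a wiring diagram of $\pi$ is read as a reduced word $s_{m_1}\cdots s_{m_r}$ with $r=|I|$. Each letter swaps the lanes currently in adjacent positions $m$ and $m+1$. Reducedness means every inverted pair swaps exactly once and no other pair swaps, so the diagram induces a linear order on $I$.

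\emph{Necessity.} Fix $i<j<k$ in launch order and restrict the diagram to these three lanes. Deleting the other strands leaves a wiring diagram of the pattern of $\pi$ on $\{i,j,k\}$, a permutation of three elements, and it preserves the relative order of the three possible crossings. Take the three patterns one at a time.
\begin{itemize}
\item If all three pairs are inverted, the pattern is the longest element of $S_3$. Its only reduced words are $s_1s_2s_1$ and $s_2s_1s_2$, and they order the crossings as $ij,ik,jk$ or $jk,ik,ij$.
\item If only $ij,ik$ are inverted, lane $i$ must pass $j$ before $k$, because $j$ lies between $i$ and $k$ until $i$ passes it. Hence $t_{ij}<t_{ik}$.
\item If only $ik,jk$ are inverted, the mirror argument gives $t_{jk}<t_{ik}$.
\end{itemize}
The pattern with $ij,jk$ inverted and $ik$ not is excluded by transitivity, as the statement already notes.

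\emph{Sufficiency.} We build the diagram greedily. Suppose a linear order $e_1<\dots<e_r$ of $I$ satisfies the triple rule. We show by induction on $q$ that after performing $e_1,\dots,e_{q-1}$ as swaps, the two lanes of $e_q=\{a,b\}$ stand adjacent and in launch order, so that swapping them is a legal letter and, since $\{a,b\}\in I$, keeps the word reduced. The invariant at each step is that the current arrangement is $\lambda$ with exactly the pairs $e_1,\dots,e_{q-1}$ swapped. This requires that set to be a consistent arrangement, that is, transitively closed and co-closed, which is itself a consequence of the triple rule applied to prefixes. After the last step every pair of $I$ has been swapped, so the arrangement is $\varphi$.

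The main obstacle is adjacency. Suppose some lane $c$ stands between $a$ and $b$ in the current arrangement, with $a$ before $b$ in launch order. We split by where $c$ started.
\begin{itemize}
\item If $c$ started between them, $a<c<b$ in launch order. Then $c$ is still after $a$ and before $b$, so neither $ac$ nor $cb$ has been performed. By transitivity at least one of them is in $I$. If both are in $I$, the first bullet of the rule puts $ab$ between them, a contradiction, since $ab$ would then precede one of them. If exactly one is in $I$, the second or third bullet forces that crossing before $ab$, a contradiction.
\item If $c$ started before $a$ and now lies after $a$, then $ca$ has already been performed and $cb$ has not. On the triple $c<a<b$, $ab$ and $ca$ are in $I$. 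If $cb$ is in $I$, the first bullet orders the crossings as $ca,cb,ab$ or the reverse; with $ca$ already done the order must be $ca,cb,ab$, so $cb$ precedes $ab$, a contradiction. If $cb$ is not in $I$, we are in the forbidden pattern, which is impossible.
\item If $c$ started after $b$, the argument is symmetric.
\end{itemize}

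For the consistency claim in the invariant, we check that a prefix of a triple-rule order never contains $ik$ without containing $ij$ or $jk$, and never contains $ij$ and $jk$ without $ik$. We would verify this directly from the three bullets, case by case.
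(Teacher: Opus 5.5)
Your proof is correct. It shares the paper's spine: after any prefix of crossings the lanes stand in the order of $\lambda$ with exactly that prefix reversed, and the next pair must then be adjacent. Where the paper's sketch goes through the characterization of inversion sets, you argue one triple at a time.

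\emph{The paper's route.} For necessity, it notes that each prefix is an inversion set, hence closed and co-closed on every triple, and that demanding this of every prefix is exactly the triple rule. For sufficiency, it reads the triple rule back as ``every prefix is closed and co-closed''. By the characterization, every prefix is then an inversion set, and a pair whose addition keeps the prefix an inversion set is adjacent.

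\emph{Your route.} You get necessity by projecting the diagram onto each triple and reading off the reduced words of a three-element pattern. You get sufficiency by an explicit bubble-sort, in which a lane $c$ standing between $a$ and $b$ is refuted by the triple rule on $\{a,b,c\}$.

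Your version is self-contained. It never needs the nontrivial direction ``closed and co-closed implies inversion set'', because your intermediate arrangements are actual permutations. The paper's version is shorter and ties the rule to Papi's convex orders.

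Two small points.
\begin{enumerate}
\item Your last paragraph is not needed. The invariant is carried by the induction itself: swapping two adjacent lanes still in launch order reverses that pair and no other. Closure and co-closure of each prefix is therefore a consequence of your construction, not something you must check first (it is also immediate from the bullets).
\item In your first adjacency case, the contradiction is that the first bullet puts one of $ac$, $cb$ \emph{before} $ab$, although neither has been performed. Saying ``$ab$ would then precede one of them'' names the wrong half of ``between''.
\end{enumerate}
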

\begin{IEEEproof}[Sketch]
In a wiring diagram two lanes cross only when adjacent, so after any prefix of crossings the lanes stand in some permutation's order and the prefix is its inversion set; conversely, a pair whose addition keeps the prefix an inversion set is adjacent and can cross next. A set of pairs is an inversion set if and only if, on every triple, it is closed ($ij,jk$ imply $ik$) and co-closed ($ik$ implies $ij$ or $jk$); requiring this of every prefix gives exactly the three cases above, Papi's characterization of convex orders~\cite{papi1994}.
\end{IEEEproof}
Crossings of disjoint pairs may coincide---they commute---while crossings that share a lane are kept apart by a per-lane no-overlap constraint on intervals starting at each crossing: $p$ long on the lane that \emph{stays} and $2G$, the footprint of a steep sweep, on the lane that \emph{moves} across, a Boolean per crossing choosing which. Two crossings of a pair lane that pass other lanes in opposite directions are kept two turning runs and a pitch apart.

\emph{Worldlines.} Read $u$ as time and the lane order as position: lanes are worldlines, and a crossing is a two-body scattering event. Three mutually crossing lanes can scatter in only two orders, $t_{ij}<t_{ik}<t_{jk}$ or its reverse---the two sides of the Yang--Baxter relation~\cite{baxter1972}, or the third Reidemeister move. The triple rule's Boolean is the choice between them.

\emph{Layer changes.} Each lane $n$ has up to three change positions $c_{n,1..3}$ with activity literals $a_{n,k}$; active changes are ordered, a via's room $D_v$ apart (0.65\,mm for a single, a pitch more for a pair), and kept out of the band in front of $S$ and out of a room at each end; their parity is fixed by the ends:
\begin{equation}
\textstyle\bigoplus_k a_{n,k} \;=\; \ell^{S}_n \oplus \ell^{D}_n ,
\end{equation}
where $\ell^S_n,\ell^D_n\in\{0,1\}$ are the end layers. For each crossing $(a,b)$ and each change $k$ of lane $a$, a literal $\beta_{a,k,ab}$ states that the change lies before the crossing, enforced with clearance: $c_{a,k}\le t_{ab}-h$ or $c_{a,k}\ge t_{ab}+h$, where $h$ is half of $D_v$, a pair's pitch more when the crossing lane is a pair, and a tenth more when $a$ stays. The layer of $a$ at the crossing is $\ell^S_a\oplus\bigoplus_k(a_{a,k}\wedge\beta_{a,k,ab})$, and crossing lanes must differ:
\begin{equation}
\textstyle
\ell^S_a\oplus\ell^S_b\oplus\bigoplus_k (a_{a,k}\wedge\beta_{a,k,ab})\oplus\bigoplus_k (a_{b,k}\wedge\beta_{b,k,ab}) \;=\; 1 .
\end{equation}
Changes of lanes that can be neighbors---adjacent in $\lambda$ or $\varphi$, or crossing each other---are staggered along $u$, within each frame, by $\max(d,\sqrt{v_v^2-p^2})$, where $d$ is the via's diameter and $v_v=0.38$\,mm the least distance between two vias' centers and a grid step, so that two vias one pitch $p$ apart across the bus still meet it.

\emph{Spins.} Give each worldline an Ising spin, its layer. Equation~(2) is then a hard antiferromagnetic bond at every crossing, a via is a domain wall along a worldline, and the objective counts domain walls. A cycle of must-differ constraints with an odd number of links cannot be satisfied without a wall---it is frustrated~\cite{toulouse1977}---and frustration is what forces vias. With the crossing order fixed this is the planar spin-glass ground state that Barahona \emph{et al.}\ solved as a maximum cut alongside via minimization~\cite{barahona1988}, and Lemma~\ref{lem:blocks} counts the domain walls of one worldline in the field of its partners. The whole-route solve also chooses the order of the scattering events.

\emph{Built-in cuts} keep a lane's changes off stretches within a via's reach of a pad of another part, or of another net's tooth, that its reference passes within a pitch of, and a pair's off its ring's corners. A cut of half-width $w$ at $u$ is a clause on each change: $c_{n,k}\le u-w$, $c_{n,k}\ge u+w$, or the change is not made ($a_{n,k}=0$).

\emph{Objective.} With $o_n\ge 0$ and $o_n\ge s_n+\sum_k a_{n,k}-2$ counting the vias of lane $n$ beyond two ($s_n$ its stub vias, the larger leg's for a pair, less a tie via that joins a ball to a pad of its own net under it),
\begin{equation}
\min\; W_O\sum_n o_n \;+\; W_V\sum_{n,k} a_{n,k} \;+\; W_S\sum_j b_j \;+\; H ,
\end{equation}
with $W_V=10^6$, $W_O=5W_V$, $W_S=3W_V$ for each soft cut $j$ broken (Section~\ref{sec:loop}), and $H$ the history term of Section~\ref{sec:loop}, a small fraction of a via for each place it prices. Preferring at most two vias per net is a price, not a cap: a cap is infeasible on many instances, and a lexicographic preference would buy one net's third via back with any number of other vias.

\emph{Proving it.} CP-SAT runs on four workers in deterministic interleaved batches, clause sharing off, so every run is the same. A first \emph{probe} restricts its full-problem searches to those that raise the bound---core-based search and two linear relaxations---for at most ten batches; CP-SAT's default portfolio dilutes these among five plan-finding searches, and left the 51-net rung's first solve unproved where the probe proves it. If the probe's plan is not proved in whole vias, plan-finding searches continue from it, with core-based search and the probe's bound as a constraint, for at most a hundred batches. A re-solve in the loop starts from the previous plan and only adds cuts and history, so a proved first solve's via terms are a valid floor for it, and a plan at that floor is proved at once. A plan the search cannot prove is still kept and laid, its nets over two vias returned to the ends; where there is no plan at all, a relaxed model names the crowded lanes, and they return to the ends. On both ladders every first solve was proved (Section~\ref{sec:cost}).

\emph{Scope.} The proof is relative to the model: the ends are fixed, each inverted pair crosses once (non-reduced braids are excluded), a lane changes layer at most three times (no plan on either ladder uses more than two), positions lie on $G$, and the built-in and loop cuts are heuristic. What is proved is the least objective in whole vias for the given ends: the layer changes, five more per via a net carries over two, and three per soft cut broken. The board's via count also depends on the ends, whose stub vias are about half of it, and a better choice of ends can beat a proved plan.

\section{Choosing the Ends}
\label{sec:ends}
Because the ends decide the braid, the escapes at both arrays are chosen with an estimate of the solve's objective, to which the search adds prices of its own: the lanes' length at a fixed rate per via, trunk congestion and crossings, and the audits' feedback (Section~\ref{sec:loop}). For a candidate assignment of teeth and berths we estimate the route's vias from the ends alone:
among lanes whose ends share a layer, inverted pairs on one layer must be separated by lanes that dive, and since their conflict graph is a permutation graph the cheapest cover is the complement of a maximum-weight increasing chain, exact per layer; a lane whose ends differ changes once, and the partners it would meet on the wrong layer form a bipartite graph covered exactly by a minimum cut; conflicts between such lanes are resolved greedily. Congestion adds each lane's load on the trunk (the room its crossings and changes take), priced by the square of its excess over half, and a fifth of a via per trunk crossing. The crossing price exists because the solve's effort follows its crossings, not its vias. The search moves one tooth or berth at a time (ejecting one conflicting lane where needed), then iterates from seeded random kicks. The best few states are then re-ranked \emph{exactly} on their orders by a lighter form of the solve's model, with no crossing positions: a Boolean per triple of mutually crossing lanes, one transitive order of each lane's crossings, and each layer change before or after each of them. Since of $c$ mutually crossing lanes on two layers at least $c-2$ change layer, that bound is added for every maximal clique of the crossing graph, and CP-SAT searches core-guided, since nearly every lane has no change. Each exact ranking stops at a work limit and the search is local, so the chosen ends are not proved best.

\emph{Street berths.} A ball beside the unpopulated band of a DDR3 package may dive in a lane of the band and leave along it on the other layer, as the human designers do. Offering such berths trades a few vias for fewer crossings, and so for a quicker solve.

\section{From Plan to Copper}
\label{sec:copper}
A plan becomes copper in stages, each with a margin over the next: the geometry and the polish keep a grid step more room than the snap and the audit. The geometry and the polish place the lanes in millimeters, and an audit gates this smooth plan; the snap then lays the pairs on the router's grid, the polish fits the singles round them, and the snap lays the singles, each step audited; a last audit, with a lint of the router's shape rules, gates the snapped plan, and the production router lays it.

\subsection{Geometry}
\label{sec:geometry}
A single linear program places every lane: an offset per lane per column, columns every 0.1\,mm along the trunk and both rings, a ring lane's two pieces joined where it leaves the trunk. The solve fixes each column's order and each lane's layers; the program moves lanes and vias across the spine only. Constraints keep same-layer neighbors a pitch apart, a via's room round every change (chords of a disc at neighboring columns), every lane inside the board and off the pad boxes, and the slope's change within one per column, a $45^\circ$ turn from a run along the spine; the ends are pinned, with a few columns beside them held parallel to the spine. Costs are the added length---under-estimated by tangent cuts, so that a slope below $13^\circ$ costs nothing and lanes take a \emph{comfortable} pitch where there is room---the bends, and a hinge penalty on neighbors closer than the comfortable pitch $2p$. Every hard constraint except the pins, the holds and the joins is elastic, each row's slack priced at $W_H=10^4$ per millimeter, so the program always has a solution and reports exactly where it had to break a rule.

A second pass holds each lane to one side of every \emph{island}---a part's pads, joined with any pads closer than a lane can pass between. The choice of side is non-convex, so it is made combinatorially before the pass, by neither the LP nor the CP-SAT model: at each island the lanes, in the order the CP-SAT solve fixed there, are split at the index that the first pass's offsets make cheapest, a split leaving either side too little room for its lanes priced out, and the island is inserted into the order at that split. The sides are thus chosen consistent with the crossing plan, and the program stays convex; the polish may still flip a lane it cannot hold on its side (Section~\ref{sec:polish}). The second pass also corrects each pitch for slope by tangent cuts at the first pass's slopes. Most of its rows are such cuts and few bind, so they are added lazily: the pass is solved with those within 0.1\,mm of binding at the first pass's answer, then again with those near binding at any answer that breaks one, each solve warm-started from the last one's basis, until none is broken. The result is the whole program's optimum. The constraints paid in the second pass become cuts for the solve: a lane that could not be kept off an island may not cross anyone over the island's span (an \emph{island cut}); a change that could not be given its room may not lie there (a \emph{via cut}).

The program is large---on the 51-net rung 7\,228 lane offsets and 150\,753 rows in pass one, and 246\,648 rows in pass two, of which its four lazy solves use 148\,257---so we pass its dual to HiGHS~\cite{huangfu2018} through SciPy, reading the offsets back as the multipliers of the dual's balance equations; the two passes take 26 and 124\,s.

\emph{Forces.} The program's optimality conditions are those of a mechanism at rest (Fig.~\ref{fig:lp}). Each lane point is a bead that slides on its column's rail, the length term is a string---slack near the spine, its tension rising toward unity as it leans---and the rules are contacts: struts between neighbors, walls at the pads and the board edge. A rule's dual value is the force in its contact, and complementary slackness is the contact law---a strut pushes only while it touches. An elastic row is a rigid-plastic strut that carries up to $W_H$ and then yields. The static and kinematic theorems of plastic collapse are this duality~\cite{charnes1959}, and the linear program that tests a packing of hard spheres for jamming solves for the same contact forces~\cite{donev2004}. Only a small fraction of the constraints carry a force, and they name the rules that hold each lane where it is; a constraint whose slack is paid is a strut that yields, and those of the static, via and dive kinds return to the solve as cuts.

\begin{figure}[t]
\centering
\input{figs/lp_cartoon.tex}
\caption{A small instance of the geometry LP, solved: three lanes on one layer, pinned at both ends (black), pass over a pad box. Each lane point slides on its column's rail (dashed); the lanes are strings at unit tension (the toy also pulls each weakly toward its straight reference). The binding rules carry the dual values drawn: the box (orange) pushes the lowest lane up where the lanes bend, and pitch struts (magenta) pass the load up the stack. Every other rule, like the strut near the pins (dotted), carries no force. Forces are in units of the string tension.}
\label{fig:lp}
\end{figure}

\subsection{Polish and snap}
\label{sec:polish}
The \emph{polish} meets the audit's own measures in board coordinates---pitch, via room, clearance to pads as drawn, stub joins and turn shapes---by small vertex moves, one linear program per round within a trust box of $\pm0.05$\,mm, every rule elastic. A rule enters a round's program while it stands within 0.1\,mm of its bar; a lane or via found inside a pad is pushed out of the pad's island, not toward its nearest edge; folds and notches the audit would flag are straightened between cycles of rounds; and the state kept is the best any round reached, not the last. A pair's end runs, and the straight runs the pair router needs either side of each dive, are laid straight first and held. A lane that cannot be held on its side of an island is flipped, and the geometry runs again; a via that cannot be given its room becomes a via cut.

\label{sec:snap}The \emph{snap} makes the smooth plan octilinear on the router's grid, one lane at a time, by an A$^*$ search in a strip $2p$ either side of each lane's smooth line that prices length, bends and distance from the line, each via within $2p$ of the plan's. Cells are blocked as the router will block them: by the router's own map of static copper, by the real copper of lanes already laid, and by the \emph{shares} of lanes not yet laid---each keeps the side of every midline nearer its own line, less half a bar---so lanes laid first cannot take the room later ones need. Pairs go first, moving exactly as the pair router moves ($45^\circ$ turns, a straight run after each turn, a via only on a straight run); they are then held while the polish fits the singles round them, and the singles are snapped. Up to two sweeps then lay every lane again against the others' real copper. A pair that cannot be laid returns a via cut at its nearest change.

\subsection{Differential pairs}
\label{sec:pairs}
A pair is one lane of two slots' width. Its \emph{hand} at an end is the side of its travel on which P lies. A pair whose hands at its tooth and berth disagree must swap its legs somewhere; the solve then requires it to change layer at least once, and the snap lays a \emph{crossover} at its first dive, as a designer would: the first leg steps out at $45^\circ$ to its barrel, dives, and jogs across the pair's line on the new layer; the second jogs at $45^\circ$ over it to a barrel just beyond, dives, and steps onto the first leg's old line. Both barrels lie on one side of the pair, staggered by the fewest whole grid steps that keep a via's pitch and each jog's clearance, and every stage knows the crossover's footprint before it is laid. At each end the legs run, through an \emph{end connector}---two legs from the pair's tips, turning at most $45^\circ$ at their knees and onto the pose's heading---onto a \emph{pose}, a grid point and heading from which the pair router takes over, so plan and router share one end.

\subsection{Audit and route}
\label{sec:route}
The audit checks the plan against the router's rules as the router will see them (pads as drawn, each terminal joined as the router joins it), waiving only shortfalls within a quarter grid step, which the router's own rounding moves; the snapped plan must also pass a lint of the router's shapes (vertices on the grid, $45^\circ$ angles, a via at every layer change, the pair router's turns and straights). The gate passes a plan only when it is complete and every count is zero. The production router~\cite{krt} then lays the pairs, then the singles, each lane confined to its \emph{band}---the router cells within half a grid step of its snapped line, with its end and via cells. A pair the router refuses in its band is tried in wider ones, and free of the plan last, after every lane has had its band, so that it cannot take the room the plan gave the singles. The audit models the router's rules but is not the router: a plan that passes the gate can still have a lane refused, which is why the board, not the plan, is graded. Independent checks grade connectivity and DRC on the written board.

\subsection{The negotiation loop}
\label{sec:loop}
Every stage is fed a measurement of the one before (Fig.~\ref{fig:loop}); a round ends at the first stage that fails, with what that stage found. Every audit's findings return to the solve as \emph{history}: every place a round found the plan short is priced, as a negotiated router prices a place overused before~\cite{mcmurchie1995}, by how many audits found it so: about $10^{-4}$ of a via for a crossing there and $10^{-3}$ for a layer change, for each audit. History thus reorders plans of equal vias; the solve's proof assumes its total stays below a via. The geometry's and the polish's paid constraints and the pair snap's failures return as cuts, and a finding at a part on one layer only as a soft \emph{layer cut} that holds the lane on the other layer there; side flips return to the geometry. When a re-solve proves no plan under its hard cuts, every hard cut so far becomes a soft cost of three vias. The loop stops when the gate passes, after six rounds or two that do not improve on the best, or when a finding stands at the ends, which only the fanout can move. After a loop that does not pass, its findings at the ends are priced in the ends model, doubling when a later round names the same end again, and the next fanout round chooses again, freeing only the ends named so far.

\emph{Never nothing.} A loop that does not pass is still laid, from its last snapped plan or else its best round's, and a lane that cannot be laid stays open; a round that lays connected and DRC-clean ends the run, whether or not its loop passed and its solve proved the plan. A round that leaves nets open sends them back to the fanout, and the run's result is its best round: no DRC violation first, then the fewest open nets, then vias. This is a counterexample-guided loop in the spirit of combinatorial Benders decomposition~\cite{hooker2003,codato2006}; unlike it, a cut here generalizes one geometric failure and is not proved valid, so the loop's result is optimal only for the model it ends with.

\emph{Sectors.} The geometry's non-convexity lives entirely in discrete choices made before it---the crossing order, the layers and the side of each island, together a homotopy class of the whole bus. Within one such topological sector, and with the first pass's linearization fixed, the geometry is convex. The loop is a search over sectors: the solve proposes one, the geometry relaxes within it, and the constraints that yield tell the solve which sectors to leave.

\begin{figure}[t]
\centering
\begin{tikzpicture}[font=\scriptsize,node distance=1.05cm and 0.55cm,
  box/.style={draw,rounded corners=2pt,align=center,minimum height=0.72cm,minimum width=1.3cm,inner sep=2pt,fill=gray!8},
  fb/.style={-{Stealth[length=3.5pt]},dashed,line width=0.5pt},
  fw/.style={-{Stealth[length=3.5pt]},line width=0.6pt},
  lb/.style={font=\tiny,inner sep=1pt}]
\node[box] (e) {ends\\fanout};
\node[box,right=of e] (s) {solve\\CP-SAT};
\node[box,right=of s] (g) {geometry\\LP};
\node[box,right=of g] (p) {polish\\LP};
\node[box,below=of p] (a) {audit\\smooth};
\node[box,below=of g] (n) {snap A$^*$\\pairs, singles};
\node[box,below=of s] (l) {audit\\and lint};
\node[box,below=of e] (r) {route\\router};
\node[box,below=0.45cm of r,fill=white] (c) {board:\\DRC, conn.};
\draw[fw] (e)--(s); \draw[fw] (s)--(g); \draw[fw] (g)--(p); \draw[fw] (p)--(a); \draw[fw] (a)--(n); \draw[fw] (n)--(l); \draw[fw] (l)--(r); \draw[fw] (r)--(c);
\draw[fb] (p.north) to[bend right=45] node[lb,below]{flips} (g.north);
\draw[fb] (g.north) to[bend right=45] node[lb,below]{cuts} (s.north);
\draw[fb] (p.north east) to[out=120,in=60,looseness=0.75] node[lb,above]{via cuts} (s.north west);
\draw[fb] (a.north west) -- node[lb,sloped,above,pos=0.55]{history, layer cuts} (s.south east);
\draw[fb] (n.north west) -- node[lb,sloped,below,pos=0.45]{cuts} ($(s.south)+(0.25,0)$);
\draw[fb] (l.north) -- node[lb,left]{history} (s.south);
\draw[fb] (r.north) -- node[lb,left]{open} (e.south);
\draw[fb] (s.south west) to[bend left=50] node[lb,below]{crowded} (e.south east);
\draw[fb] (a.south) -- ++(0,-1.6) -| node[lb,pos=0.2,below]{findings at the ends} ($(e.west)+(-0.4,0)$) -- (e.west);
\end{tikzpicture}
\caption{The whole route. Solid arrows pass the plan forward: an audit gates the smooth plan before the snap, and a last audit with a lint gates the snapped plan before the router lays it. Dashed arrows are the negotiation: lanes with no room beside an island return to the geometry as side flips; rules the geometry or the polish had to break, and lanes the snap could not lay, return to the solve as cuts; the audits' findings return as history, and those at a one-layer part as layer cuts; findings at the ends, open nets and crowded lanes return to the choice of ends.}
\label{fig:loop}
\end{figure}

\begin{figure*}[t]
\centering
\includegraphics[width=\textwidth]{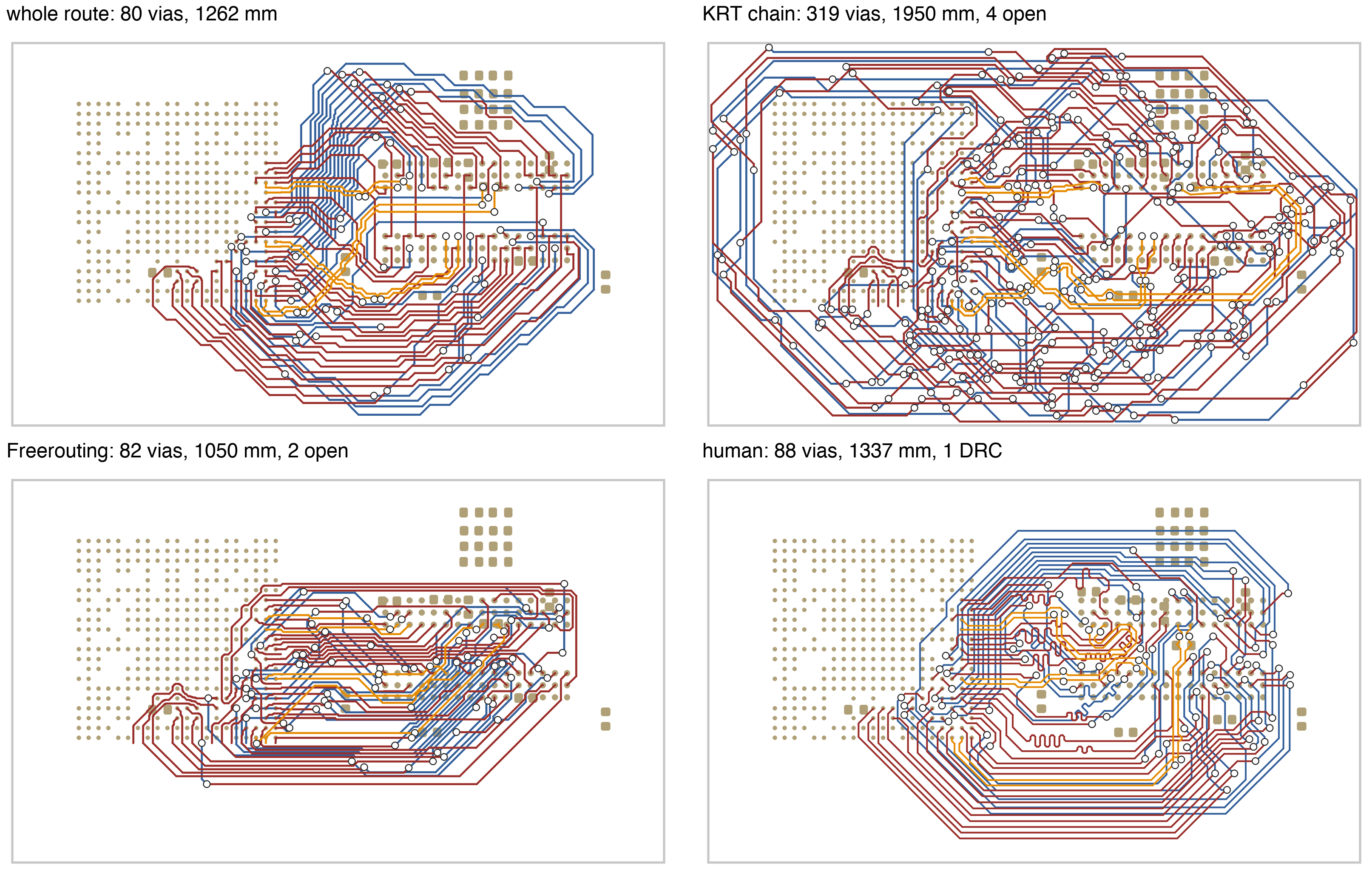}
\caption{The 51-net H3 rung by the whole route on its own ends, the toolkit's production router without the bus step, Freerouting, and the human, only the run's nets drawn; each panel is labeled with its vias, copper, open nets and DRC violations. F is red, B blue, the three differential pairs orange; every via is drawn as the same marker. The source BGA is at left, the DDR3 at right.}
\label{fig:k51}
\end{figure*}

\begin{table*}[t]
\centering
\caption{Vias, copper (mm), open nets and DRC violations of the run's nets on the finished board: the whole route, the toolkit's production router without the bus step (KRT chain), Freerouting 2.4.1, and the human's board, every board graded the same way. The Zynq's last row, Full, is its bus on its own board (Section~\ref{sec:board}), on the 45 nets graded there.}
\label{tab:results}
\setlength{\tabcolsep}{3.6pt}
\newcommand{\rsep}{\hspace{9pt}\vrule\hspace{9pt}}
\setlength{\aboverulesep}{0pt}\setlength{\belowrulesep}{0pt}\renewcommand{\arraystretch}{1.15}
\begin{tabular}{l rrrr @{\rsep} rrrr @{\rsep} rrrr @{\rsep} rrrr}
\toprule
& \multicolumn{4}{c@{\rsep}}{whole route} & \multicolumn{4}{c@{\rsep}}{KRT chain} & \multicolumn{4}{c@{\rsep}}{Freerouting} & \multicolumn{4}{c}{human} \\
\cmidrule(l{4pt}r{20pt}){2-5}\cmidrule(l{4pt}r{20pt}){6-9}\cmidrule(l{4pt}r{20pt}){10-13}\cmidrule(l{4pt}r{2pt}){14-17}
rung & vias & mm & open & DRC & vias & mm & open & DRC & vias & mm & open & DRC & vias & mm & open & DRC \\
\midrule
\csname @@input\endcsname figs/table1.tex 
\bottomrule
\end{tabular}
\end{table*}

\section{Results}
\label{sec:results}
\subsection{Benches and grading}
\emph{H3.} An Allwinner H3 (0.65\,mm ball pitch) to a DDR3 device on a public reference design~\cite{h3board}, whose designer routed the bus on its two layers. Its bench carries 51 bus nets, among them three differential pairs (two data strobes and the clock); the ladder's rungs K15--K51 add whole groups of nets, tightest first.
\emph{Zynq.} The Zynq-7020 (CLG400) and its DDR3 on a public design~\cite{zynqboard}, whose designer routed the bus on the outer layers only (the inner layers are planes), every via F to B. We strip its routing, fan out the processor and keep the two outer layers, by a recipe from the board file in the repository~\cite{krt}; the ladder's rungs K18--K43 carry both strobe pairs.

Each result is written as a board and graded independently: connectivity of every bus net, DRC at the board's clearance of 0.1\,mm, and a census of every via and millimeter of the run's nets, stubs included. The other routers' boards and the human's are graded the same way. Runs used CP-SAT from OR-Tools 9.15 and HiGHS through SciPy 1.16, on Python 3.14, on an Apple M3 laptop (8 cores, 8\,GB), at the toolkit's commit \texttt{\krtcommit}.

\subsection{Against the human}
Table~\ref{tab:results} gives every rung of both ladders on our own ends. Every rung routes every lane in its band at once, every net connected, DRC-clean, never with more vias than the human and always with less copper; every rung but H3 K28 and K35, whose first fanout the fanout's own audit refused, does so in its first fanout round. At the top of the H3 ladder the route uses 80 vias to the human's 88 (Fig.~\ref{fig:k51}); on the Zynq, 62 to 100. Part of the copper difference is the human's length matching.

\subsection{Against other routers}
Two other routers ran on the same benches, nets and design rules, graded the same way (Table~\ref{tab:results}): the production chain of the toolkit the whole route belongs to~\cite{krt}, without the bus step---its BGA fanout at the destination, its differential-pair router and its negotiated A* router---and Freerouting 2.4.1~\cite{freerouting}, an open-source autorouter, through KiCad's Specctra export, with its default settings but two: at most 100 passes, which no run reached, and its score's charge for a clearance violation, a fifth of an unrouted connection, lowered a thousandfold. Its own check counts the benches' fixed copper, legal at the board's clearance, as up to 106 violations, mostly where its octagons stand out past round pads and track ends; at the default charge these held the smaller Zynq rungs' score at zero and stopped them early. Freerouting routes each leg of a differential pair as an independent net.

The production chain completes six of the ten rungs and leaves one to four nets open on the others, with 2.0 to 4.7 times the whole route's vias (319 against 80 at H3 K51) and more copper on every rung; at Zynq K43 its board also has seven clearance violations. Freerouting completes none: it leaves 1 to 5 nets open, and at Zynq K33 its board also has seven DRC violations. On the nets it does connect it can use fewer vias than the whole route---44 against 54 at H3 K41, with four nets open---because its lanes enter the destination's ball field from the near face, where our ends escape every ball to the array's edge (Section~\ref{sec:limits}). In wall time on the same laptop, the production chain takes 5 to 59\,s a rung and Freerouting 1.6 to 3.9 minutes; the whole route takes 18 to 290\,s, 3.6 to 6.8 times the production chain's time and from a fifth of Freerouting's to 1.5 times it. On the Zynq bus on its own board (Section~\ref{sec:board}) the bus step takes 382\,s, the production chain 78\,s and Freerouting 164\,s.

\subsection{On the board itself}
\label{sec:board}
The Zynq bench cuts the bus out of its board at the paper's uniform sizes. The toolkit's \emph{bus step} instead routes it on the board itself, as the production chain hands the board on: planes poured on the inner layers, the board's own design rules, both arrays fanned out by the step, and the lanes on F and B, where the designer routed them. The step takes 46 of the nets between the arrays, leaving the DDR reset's pull-down stub to the rest of the chain; it refuses the clock pair, which passes through a series termination, and the column-address strobe, for which its fanout found no escape. The production chain and Freerouting ran on the same board, nets, layers and rules, and every board, the human's included, was graded on the 45 nets other than the reset at the board's clearance (Table~\ref{tab:results}, Fig.~\ref{fig:board}). The whole route completes them in its first round, with 70 vias to the human's 105 and less copper; the production chain leaves one net open and two clearance violations, with 162 vias, and Freerouting two nets open with 70.

\begin{figure*}[t]
\centering
\includegraphics[width=\textwidth]{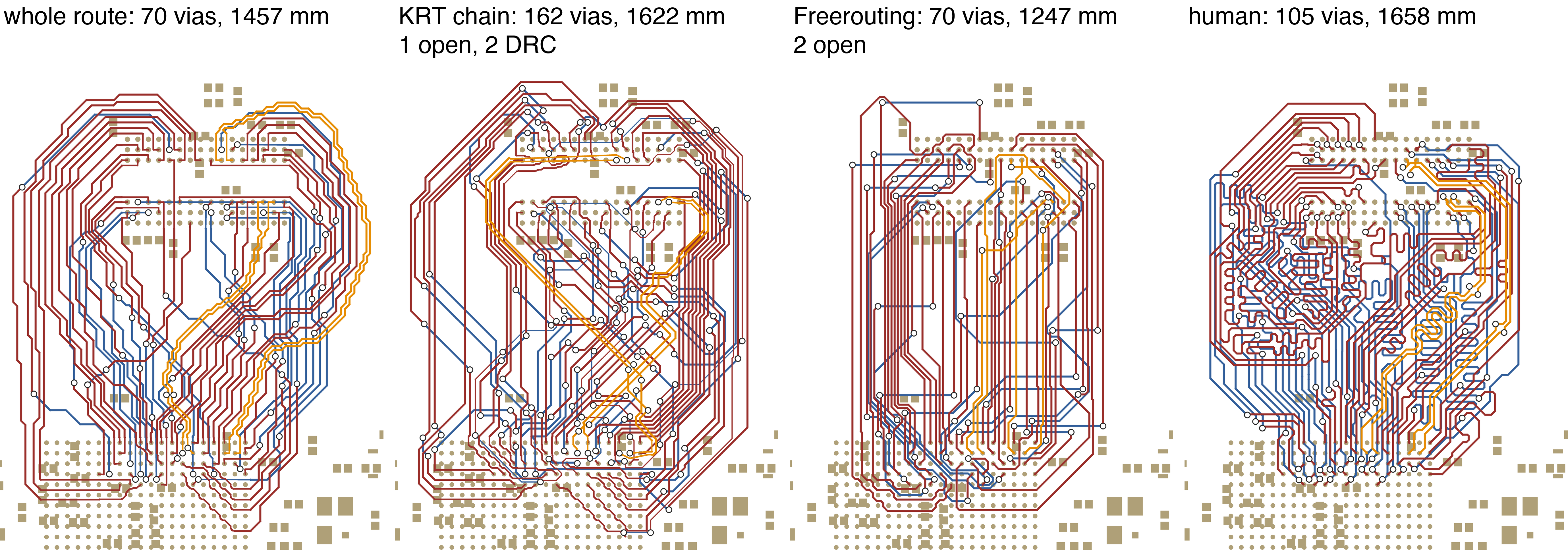}
\caption{The Zynq DDR3 bus on the board itself, drawn as Fig.~\ref{fig:k51}: the whole route's bus step, the toolkit's production router, Freerouting and the human, the 45 nets graded on each, each panel labeled with its vias, copper, open nets and DRC violations. The processor is at the bottom, the DDR3 at the top.}
\label{fig:board}
\end{figure*}

\subsection{Synthetic buses}
\label{sec:synth}
The repository's generator plants a crossing pattern between two arrays and computes the optimum of its \emph{channel}---every tooth on the source's facing column and every berth on the destination's, in the balls' order---exactly, over every crossing order~\cite{krt}. Twenty-nine such buses of 4 to 16 nets span sorted, blocked, interleaved, shuffled and reversed patterns, short 3\,mm channels, all-via escapes, walls of parts in the channel, rings round the destination, pairs, and lanes that must wind round it. On two layers the whole route completes 23 of them connected and DRC-clean. Of the 16 whose channel optimum applies, it lays exactly the optimum on 10 and completes two more above it, at 18 and 34 vias where 10 and 14 are optimal, with every escape a via. The six it leaves incomplete all have the short channel, and leave one to five nets open: two layers do not give the changes room there. On cases without a graded optimum the route can lay fewer vias than the channel's floor, because it chooses its own ends and may leave an array by another face.

\subsection{More routing layers}
\label{sec:layers}
The model extends to $q>2$ routing layers with few changes. A lane's layer between its changes takes one of $q$ values, two lanes cross only on different layers, and a crossing's room is shared only by the lanes that cross on one layer. An end whose stub carries a via may leave on any layer its run fits, the solve choosing which, and the run is moved onto that layer before the geometry. In the spin picture the layer becomes a $q$-state Potts spin~\cite{wu1982potts} with an antiferromagnetic bond at every crossing, and frustration begins only at $q+1$ mutually crossing lanes: the clique bound of Section~\ref{sec:ends} becomes $c-q$ changes among $c$ such lanes. On the 29 generated buses on three, four and six routing layers, all 87 runs complete connected and DRC-clean, and all 48 whose channel optimum applies lay exactly that optimum, the short channels and all-via escapes included. More layers do not lower these channels' optimum, which equals the floor of Lemma~\ref{lem:floor} on every layer count since the arrays' pads are on F; what they give is room for the changes.

\subsection{Cost}
\label{sec:cost}
Of H3 K51's 290\,s, the fanout with its exact re-ranking takes 42\,s, the solve 32\,s, the geometry LP 150\,s, the polishes 19\,s, the snaps 27\,s, the audits and the lint 11\,s, and laying the plan 6\,s; the geometry is about half the time at the top of both ladders. The time grows a little faster than the bus, from 1.2\,s a net on the 15-net H3 rung to 5.7\,s on the 51-net one, and peak memory is 250 to 720\,MB. On the Zynq board the whole route takes 357\,s of the bus step's 382, 119 of them fanning out both arrays and 122 the geometry. The whole route runs mostly on one core, its CPU time within 11\% of its wall time on every rung, so running three rungs and the board's bus step side by side cost it little waiting. Every rung's first solve is proved optimal: at the top of the H3 ladder its 313 crossings and 2\,744 triples in 30\,s, every other rung's in 7\,s or less, and on the board 229 crossings and 1\,922 triples in 46\,s.

\section{Limitations and Future Work}
\label{sec:limits}
The route does not match lengths; a DDR interface requires length (or delay) matching within byte lanes and of the address group to the clock~\cite{ozdal2006length}. Matching lengths is our next step: as recent length matching assigns each trace a region by linear programming~\cite{lin2026,fang2024}, each lane's meander area can become a variable of the geometry LP. On more than two routing layers the model is so far graded only on generated buses (Section~\ref{sec:layers}), and a pair through a series termination between the arrays is left to the rest of the chain, since the whole route lays no waypoint. Optimality is relative to the model, and the ends are chosen by a local search whose estimate is exact only in its parts; a model over the ends that bounds the route's own objective could prove them. Ends inside the destination's ball field, where Freerouting's lanes enter it, are a natural extension. Letting the ends search swap a DDR controller's data bits within a byte lane, as pin assignment has been chosen jointly with escape routing~\cite{leimak2013}, could remove most crossings within it. With the crossing order fixed and via room set aside, the layer assignment is a planar Ising ground state, solvable in polynomial time~\cite{hadlock1975,barahona1982,glass2026}; it could give the ends search a fast exact evaluator and a lower bound. The solve counts a pair's dive once where the board counts two barrels, and intra-pair skew is not yet planned. Finally, two boards are few, and the design constants were developed on them; we are extending the evaluation to more boards.

\section{Conclusion}
Planning a bus whole---ends, crossings, their order and every layer change decided at once, the plan proved optimal for the chosen ends and negotiated with the geometry until the production router can lay it---routes one real DDR3 bus whole and 43 nets of another fully automatically, and 45 of the second's nets on its own board, with fewer vias and less copper than their designers. The physics proved of practical use: the worldline picture gives the triple rule its meaning, the spin picture traces every via to frustration, and the forces of the linear program's dual name the rules that hold each lane and the ones that must yield. We expect the same pairing---an exact choice of topological sector, relaxed in a convex statics problem whose forces steer the next choice---to carry over to length matching and to boards with more layers.

\bibliographystyle{IEEEtran}
\bibliography{refs}

\end{document}

%% file: figs/lp_cartoon.tex
\begin{tikzpicture}[x=0.78cm,y=0.62cm,font=\scriptsize]
\draw[gray!45,dashed,line width=0.3pt] (0,-0.55)--(0,3.8);
\draw[gray!45,dashed,line width=0.3pt] (1,-0.55)--(1,3.8);
\draw[gray!45,dashed,line width=0.3pt] (2,-0.55)--(2,3.8);
\draw[gray!45,dashed,line width=0.3pt] (3,-0.55)--(3,3.8);
\draw[gray!45,dashed,line width=0.3pt] (4,-0.55)--(4,3.8);
\draw[gray!45,dashed,line width=0.3pt] (5,-0.55)--(5,3.8);
\draw[gray!45,dashed,line width=0.3pt] (6,-0.55)--(6,3.8);
\draw[gray!45,dashed,line width=0.3pt] (7,-0.55)--(7,3.8);
\draw[gray!45,dashed,line width=0.3pt] (8,-0.55)--(8,3.8);
\fill[gray!35] (2.65,-0.5) rectangle (5.35,1.33);
\node[font=\tiny] at (4.00,0.41) {pad box};
\draw[Fcu!85!black,line width=1.1pt] (0,0.000)--(1,0.483)--(2,0.967)--(3,1.450)--(4,1.450)--(5,1.450)--(6,0.967)--(7,0.483)--(8,0.000);
\draw[Fcu!85!black,line width=1.1pt] (0,1.350)--(1,1.717)--(2,2.083)--(3,2.450)--(4,2.450)--(5,2.450)--(6,2.083)--(7,1.717)--(8,1.350);
\draw[Fcu!85!black,line width=1.1pt] (0,2.700)--(1,2.950)--(2,3.200)--(3,3.450)--(4,3.450)--(5,3.450)--(6,3.200)--(7,2.950)--(8,2.700);
\draw[magenta!80!black,line width=1.74pt] (3,1.520)--(3,2.380);
\node[magenta!70!black,right,font=\tiny,inner sep=1pt] at (3,1.950) {0.75};
\draw[magenta!80!black,line width=1.12pt] (3,2.520)--(3,3.380);
\node[magenta!70!black,right,font=\tiny,inner sep=1pt] at (3,2.950) {0.38};
\draw[magenta!80!black,line width=1.74pt] (5,1.520)--(5,2.380);
\node[magenta!70!black,right,font=\tiny,inner sep=1pt] at (5,1.950) {0.75};
\draw[magenta!80!black,line width=1.12pt] (5,2.520)--(5,3.380);
\node[magenta!70!black,right,font=\tiny,inner sep=1pt] at (5,2.950) {0.38};
\draw[orange!90!black,line width=2.70pt,-{Stealth[length=4pt]}] (3,0.600)--(3,1.360);
\node[orange!80!black,right,font=\tiny,inner sep=1.5pt] at (3,0.850) {1.33};
\draw[orange!90!black,line width=2.70pt,-{Stealth[length=4pt]}] (5,0.600)--(5,1.360);
\node[orange!80!black,left,font=\tiny,inner sep=1.5pt] at (5,0.850) {1.33};
\draw[gray,dotted,line width=0.6pt] (1,0.553)--(1,1.647);
\node[gray,right,font=\tiny,inner sep=1pt] at (1,1.100) {0};
\filldraw[fill=black,line width=0.5pt] (0,0.000) circle (2.2pt);
\filldraw[fill=white,line width=0.5pt] (1,0.483) circle (2.2pt);
\filldraw[fill=white,line width=0.5pt] (2,0.967) circle (2.2pt);
\filldraw[fill=white,line width=0.5pt] (3,1.450) circle (2.2pt);
\filldraw[fill=white,line width=0.5pt] (4,1.450) circle (2.2pt);
\filldraw[fill=white,line width=0.5pt] (5,1.450) circle (2.2pt);
\filldraw[fill=white,line width=0.5pt] (6,0.967) circle (2.2pt);
\filldraw[fill=white,line width=0.5pt] (7,0.483) circle (2.2pt);
\filldraw[fill=black,line width=0.5pt] (8,0.000) circle (2.2pt);
\filldraw[fill=black,line width=0.5pt] (0,1.350) circle (2.2pt);
\filldraw[fill=white,line width=0.5pt] (1,1.717) circle (2.2pt);
\filldraw[fill=white,line width=0.5pt] (2,2.083) circle (2.2pt);
\filldraw[fill=white,line width=0.5pt] (3,2.450) circle (2.2pt);
\filldraw[fill=white,line width=0.5pt] (4,2.450) circle (2.2pt);
\filldraw[fill=white,line width=0.5pt] (5,2.450) circle (2.2pt);
\filldraw[fill=white,line width=0.5pt] (6,2.083) circle (2.2pt);
\filldraw[fill=white,line width=0.5pt] (7,1.717) circle (2.2pt);
\filldraw[fill=black,line width=0.5pt] (8,1.350) circle (2.2pt);
\filldraw[fill=black,line width=0.5pt] (0,2.700) circle (2.2pt);
\filldraw[fill=white,line width=0.5pt] (1,2.950) circle (2.2pt);
\filldraw[fill=white,line width=0.5pt] (2,3.200) circle (2.2pt);
\filldraw[fill=white,line width=0.5pt] (3,3.450) circle (2.2pt);
\filldraw[fill=white,line width=0.5pt] (4,3.450) circle (2.2pt);
\filldraw[fill=white,line width=0.5pt] (5,3.450) circle (2.2pt);
\filldraw[fill=white,line width=0.5pt] (6,3.200) circle (2.2pt);
\filldraw[fill=white,line width=0.5pt] (7,2.950) circle (2.2pt);
\filldraw[fill=black,line width=0.5pt] (8,2.700) circle (2.2pt);
\node[above,font=\tiny] at (6,3.300) {$o_{n,k}$};
\node[below,font=\tiny,gray] at (0,-0.55) {pinned};
\node[below,font=\tiny,gray] at (8,-0.55) {pinned};
\end{tikzpicture}